\newtheorem{proposition}{Proposition}
\newtheorem{theorem}{Theorem}
\newcommand{\fixdefsectionspacing}{\vspace*{-8pt}}
\begin{document}

\setlength{\emergencystretch}{1em}

\title{NumGfun: a Package for Numerical and Analytic Computation with \mbox{D-finite}
Functions}\author{Marc Mezzarobba\\
\affaddr{Algorithms Project-Team, INRIA Paris-Rocquencourt, France}\\
\email{marc.mezzarobba@inria.fr}}
\maketitle

\begin{abstract}
  This article describes the implementation in the software package
  NumGfun of classical algorithms that operate on solutions
  of linear differential equations or recurrence relations with
  polynomial coefficients, including what seems to be
  the first general implementation of the fast high-precision
  numerical evaluation algorithms of Chudnovsky~\& Chudnovsky. In some cases,
  our descriptions contain improvements over existing algorithms. We
  also provide references to relevant ideas not currently used in NumGfun.
\end{abstract}

\medskip\noindent\textbf{Categories and Subject Descriptors:} I.1.2 [\textbf{Symbolic and Algebraic Manipulation}]: Algorithms

\medskip\noindent\textbf{General Terms:} Algorithms, Experimentation, Theory

\medskip\noindent\textbf{Keywords:} \mbox{D-finite} functions, linear differential equations, certified numerical computation, bounds, Maple

\section{Introduction}\label{sec:intro}

Support for computing with {\emph{\mbox{D-finite} functions}}, that is, solutions of
linear differential equations with polynomial coefficients, has become a
common feature of computer algebra systems. For instance, Mathematica now
provides a data structure called \texttt{DifferentialRoot} to represent
arbitrary \mbox{D-finite} functions by differential equations they satisfy and
initial values. Maple's \texttt{DESol} is similar but more limited.

An important source of such general \mbox{D-finite} functions is combinatorics, due
to the fact that many combinatorial structures have \mbox{D-finite} generating
functions. Moreover, powerful methods allow to get from a combinatorial
description of a class of objects to a system of differential equations that
``count'' these objects, and then to extract precise asymptotic information
from these equations, even when no explicit counting formula is
available~{\cite{FlajoletSedgewick2009,Stanley1999}}. A second major
application of \mbox{D-finiteness} is concerned with special functions. Indeed, many
classical functions of mathematical physics are \mbox{D-finite} (often by virtue of
being defined as ``interesting'' solutions of simple differential equations),
which allows to treat them uniformly in algorithms. This is exploited by the
\emph{Encyclopedia of Special
Functions}~{\cite{MeunierSalvy2003}}
and its successor under development, the \emph{Dynamic Dictionary of
Mathematical
Functions}~\cite{DDMF}, an
interactive computer-generated handbook of special functions.

These applications require at some point the ability to perform ``analytic''
computations with \mbox{D-finite} functions, starting with their numerical
evaluation. Relevant algorithms exist in the literature. In particular,
\mbox{D-finite} functions may be computed with an absolute error bounded by $2^{- n}$
in $n \log^{O (1)} n$ bit operations---that is, in softly linear time in
the size of the result written in fixed-point notation---at any point of
their Riemann surfaces~{\cite{ChudnovskyChudnovsky1990}}, the necessary error
bounds also being computed from the differential equation and initial
values~{\cite{vdH1999}}. However, these algorithms have remained
theoretical~{\cite[\S9.2.1]{Dupont2006}}.
The ability of
computer algebra systems to work with \mbox{D-finite} functions is (mostly)
limited to symbolic manipulations, and the above-mentioned fast evaluation
algorithm has served as a recipe to write numerical evaluation routines for
specific functions rather than as an algorithm for the entire class of
\mbox{D-finite} functions.

This article introduces NumGfun, a Maple package that attempts to fill this
gap, and contains, among other things, a general implementation of that
algorithm. NumGfun is distributed as a subpackage of
gfun~\cite{SalvyZimmermann1994},
under the GNU LGPL. Note that it comes with help pages: the goal of the
present article is not to take the place of user documentation, but rather to
describe the features and implementation of the package, with supporting
examples, while providing an overview of techniques relevant to the development of similar software. The following examples illustrate typical
uses of NumGfun, first to compute a remote term from a combinatorial sequence, then to evaluate a special function to high precision
near one of its singularities.

\begin{example}
  \label{ex:Motzkin}The Motzkin number $M_n$ is the number of ways of drawing
  non-intersecting chords between $n$ points placed on a circle. Motzkin
  numbers satisfy $(n + 4) M_{n + 2} = 3 (n+1) M_n + (2 n + 5) M_{n + 1}$. Using
  NumGfun, the command
  \texttt{nth\_term(\{(n+4)*M(n+2)\hspace{0pt plus 1ex}=\hspace{0pt plus 1ex}3*(n+1)*M(n)+(2*n+5)*M(n+1), M(0)=1,M(1)=1\},M(n),\emph{k})} computes
  $M_{10^5} = 6187\ldots7713 \simeq 10^{47\,706}$ in{\footnote{All
  timings reported in this article were obtained with the following
  configuration: Intel~T7250 CPU at 2~GHz, 1~GB of RAM, Linux~2.6.32,
  Maple~13, GMP~4.2.1.}} 4.7~s and $M_{10^6} = 2635 \ldots 9151 \simeq 
  10^{477\,112}$ in 1~min. 
  Naïvely unrolling the recurrence (using Maple) takes 10.7~s for
  $M_{10^5}$, and 41~s for $M_{2\cdot10^5}$.
  On this (non-generic) example, \texttt{nth\_term} could be made competitive for smaller indices by taking advantage of the fact that the divisions that occur while unrolling the recurrence are exact.
\end{example}

\begin{example}
  \label{ex:Heun}The double confluent Heun function $U_{\alpha, \beta,
  \gamma, \delta}$ satisfies $(z^2 - 1)^3 U''(z) + {(2 z^5 - 4 z^3 - \alpha
  z^4 + 2 z + \alpha)} U'(z) + {(\beta z^2 + (2 \alpha + \gamma) z +
  \delta)} U(z) =
  0$, $U (0) = 1$, $U' (0) = 0$.
  It is singular at $z=\pm 1$.
  The command
  \texttt{evaldiffeq(\emph{eq},y(z), -0.99,1000)}
  where \texttt{\emph{eq}} is this differential equation
  yields $U_{1, \frac{1}{3}, \frac{1}{2}, 3} (- 0.99) \approx 4.67755
  \text{...(990 digits)...} 05725$ in 22~s.
\end{example}
\fixdefsectionspacing

\paragraph*{Related work}Most algorithms implemented in NumGfun originate in
work of Chudnovsky~\& Chudnovsky and of van der Hoeven. Perhaps the most
central of these is the ``bit burst'' numerical evaluation
method~\cite{ChudnovskyChudnovsky1990}. It belongs to the
family of binary splitting algorithms for \mbox{D-finite} series, hints at the
existence of which go back to~{\cite[\S178]{HACKMEM}}, and generalizes earlier work of
Brent~{\cite{Brent1976}} for specific elementary functions. Better known (thanks
among others to~{\cite{HaiblePapanikolaou1997}}) binary
splitting algorithms can be seen as special cases of the bit burst algorithm.
One reason why, unlike these special cases, it was not
used in practice is that in~{\cite{ChudnovskyChudnovsky1990}}, none of the
error control needed to ensure the accuracy of the computed result is part of
the algorithm. Van der
Hoeven's version~{\cite{vdH1999}} addresses this issue, thus giving a
full-fledged evaluation algorithm for the class of \mbox{D-finite} functions, as
opposed to a method to compute any \mbox{D-finite} function given certain bounds.

These algorithms extend to the computation of limits of \mbox{D-finite}
functions at singularities of their defining equation. The case of regular
singularities is treated both
in~{\cite{ChudnovskyChudnovsky1986+1987,ChudnovskyChudnovsky1990}}, and more
completely in~{\cite{vdH2001}}, that of irregular singular
points in~{\cite{vdH2007}}. See~{\cite[\S12.7]{Bernstein2008}}, {\cite[\S1]{vdH2007}} for more history and context. 

On the implementation side, routines based on binary splitting for the
evaluation of various elementary and special functions are used in
general-purpose libraries such as CLN~{\cite{CLN}} and
MPFR~{\cite{FousseHanrotLefevrePelissierZimmermann2007,Jeandel2000}}. Binary
splitting of fast converging series is also the preferred algorithm of
software dedicated to the high-precision computation of mathematical constants
on standard PCs, including the current record holder for~$\pi$~{\cite{Bellard2010}}. Finally, even the impressive range of built-in
functions of computer algebra systems is not always sufficient for
applications. Works on the implementation of classes of ``less common'' special
functions that overlap those considered in NumGfun
include~{\cite{AbadGomezSesma2008,FalloonAbbottWang2003}}.

This work is based in part on the earlier~\cite{Mezzarobba2007}.

\paragraph*{Contribution}
The main contribution presented in this article is
NumGfun itself. We recall the algorithms it uses, and
discuss various implementation issues. Some
of these 
descriptions include
improvements or details that do not seem to have appeared elsewhere.
Specifically:
(i)~we give a new variant of the analytic continuation algorithm for D-finite functions that is faster with respect to the order and degree of the equation;
(ii)~we improve the complexity analysis of the bit burst algorithm by a factor $\log \log n$;
(iii)~we point out that Poole's method to construct solutions of differential equations at regular singular points can
be rephrased in a compact way in the language of noncommutative polynomials, leading to faster evaluation of \mbox{D-finite} functions in these points;
and
(iv)~we describe in some detail the practical computation of all the bounds needed to obtain a provably correct result.

\paragraph*{What NumGfun is not}Despite sharing some of the algorithms used to
compute mathematical constants to billions of digits, our code aims to
cover as much as possible of the class of \mbox{D-finite} functions, not to break
records. Also, it is limited to ``convergent'' methods: asymptotic expansions, summation to the least term, and resummation of divergent power series are currently out of scope.

\paragraph*{Terminology}Like the rest of gfun, NumGfun works with
{\emph{\mbox{D-finite} functions}} and {\emph{P-recursive sequences}}. We recall only
basic definitions here; see {\cite[\S6.4]{Stanley1999}} for further
properties. A formal power series $y \in \mathbbm{C}[[z]]$ is
{\emph{\mbox{D-finite}}} over $K \subseteq \mathbbm{C}$ if it solves a non-trivial
linear differential equation
\begin{equation}
  \label{eq:deq}
  y^{(r)} (z) + a_{r - 1} (z)\, y^{(r - 1)} (z) + \cdots + a_0 (z)\, y (z) = 0
\end{equation}
with coefficients $a_k \in K (z)$. The same definition applies to analytic
functions. A sequence $u \in \mathbbm{C}^{\mathbbm{N}}$ is
{\emph{P-recursive}} over~$K$ if it satisfies a non-trivial linear recurrence
relation
\begin{equation}
\label{eq:rec}
  u_{n + s} + b_{s - 1} (n)\, u_{n + s - 1} + \cdots + b_0 (n)\, u_n = 0,
  \; b_k \in K (n).
\end{equation}
A sequence $(u_n)_{n \in \mathbbm{N}}$ is P-recursive if and only if its
generating series $\sum_{n \in \mathbbm{N}} u_n z^n$ is \mbox{D-finite}.

The poles of the coefficients $a_k$ of~(\ref{eq:deq}) are its {\emph{singular
points}};  nonsingular points are called {\emph{ordinary}}. In gfun, a
\mbox{D-finite} function is represented by a differential equation of the
form~(\ref{eq:deq}) and initial values at the origin, which we assume to be an
ordinary point. Similarly, P-recursive sequences are encoded by a recurrence
relation plus initial values, as in Ex.~\ref{ex:Motzkin} above. If $y (z) =
\sum_{k = 0}^{\infty} y_k z^k$, we let $y_{; n} (z) = \sum_{k = 0}^{n - 1} y_k
z^k$ and $y_{n ;} (z) = \sum_{k = n}^{\infty} y_k z^k$.

The {\emph{height}} of an object is the maximum bit-size of the integers
appearing in its representation: the height of a rational number $p / q$ is
$\max (\lceil \log p \rceil, \lceil \log q \rceil)$, and that of a complex
number (we assume that elements of number fields $\mathbbm{Q}(\zeta)$ are
represented as $\sum_i x_i \zeta^i / d$ with $x_i, d \in \mathbbm{Z}$),
polynomial, matrix, or combination thereof with rational coefficients is the
maximum height of its coefficients. We assume that the bit complexity $M (n)$
of $n$-bit integer multiplication satisfies $M (n) = n (\log n)^{O (1)}$, $M
(n) = \Omega (n \log n)$, and $M (n + m) \geq M (n) + M (m)$, and that
$s\times s$ matrices can be multiplied in $O(s^\omega)$ 
operations in their coefficient ring.

\section{Binary splitting}\label{sec:binsplit}

``Unrolling'' a recurrence relation of the form~(\ref{eq:rec}) to compute
$u_0, \ldots, u_N$ takes $\Theta (N^2 M (\log N))$ bit operations, which is
almost linear in the total size of $u_0, \ldots, u_N$, but quadratic in that
of~$u_N$. The binary splitting algorithm computes a single term~$u_N$ in
essentially linear time, as follows:~(\ref{eq:rec}) is first reduced to a
matrix recurrence of the first order with a single common denominator:
\begin{equation}\label{eq:bsrec}
  q (n) U_{n + 1} = B (n) U_n, \hspace{1em} B (n) \in \mathbbm{Z}[n]^{s \times
  s}, q (n) \in \mathbbm{Z}[n], 
\end{equation}
so that $U_N = P (0, N) \, U_0 / \bigl(\prod_{i = 0}^{N - 1} q (i)\bigr)$, where $P
(j, i) = {B(j-1)} \cdots B(i+1) B(i)$. One then computes $P (0, N)$
recursively as $P (0, N) = P ( \left\lfloor N / 2 \right\rfloor, N) P (0,
\left\lfloor N / 2 \right\rfloor)$, and the denominator $\prod_{i = 0}^{N - 1} q (i)$ in a
similar fashion (but separately, in order to avoid expensive gcd
computations).

The idea of using product trees to make the most of fast multiplication dates
back at least to the seventies~\cite[\S12.7]{Bernstein2008}.
The general statement below is
from~{\cite[Theorem~2.2]{ChudnovskyChudnovsky1990}}, except that the authors
seem to have overlooked the cost of evaluating the
polynomials at the leaves of the tree. 

\begin{theorem}[Chudnovsky, Chudnovsky] \label{thm:binsplit}
  Let $u$ be a P-recursive sequence over $\mathbbm{Q}(i)$, defined
  by~\eqref{eq:rec}. Assume that the
  coefficients $b_k(n)$ of~\eqref{eq:rec} have no poles in~$\mathbbm{N}$.
  Let $d, h$ denote bounds on their degrees (of numerators and
  denominators) and heights, and $d', h'$ corresponding bounds for the
  coefficients  of $B (n)$ and $q (n)$ in~\eqref{eq:bsrec}.
  Assuming $N \gg s, d$, 
  the binary splitting algorithm outlined above computes
  one term~$u_N$ of $u$ in 
  $O (s^{\omega} M (N (h' + d' \log N)) \log (Nh'))$, 
  that is, 
  $O (s^{\omega} M (sdN (h + \log N)) \log (Nh))$, bit
  operations.
\end{theorem}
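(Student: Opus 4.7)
My plan is to argue in two stages: first a reduction of~(\ref{eq:rec}) to the integer-matrix recurrence~(\ref{eq:bsrec}), then a level-by-level cost analysis of the product tree.

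For the reduction, write $b_k(n) = p_k(n)/\tilde q_k(n)$ with $\deg p_k,\deg \tilde q_k\le d$ and numerator/denominator heights $\le h$. Clearing denominators with $q(n) = \prod_{k<s}\tilde q_k(n)$ yields a companion-style matrix $B(n)\in\mathbbm{Z}[n]^{s\times s}$. The polynomial $q$ has degree $\le sd$, and by the Mahler--Mignotte bound on a product of $s$ polynomials of degree $d$ and height $h$ its height is $O(sh+sd)$; the entries of $B$ admit the same kind of bound, so one may take $d'=O(sd)$ and $h'=O(sh+sd)$. Feeding these into the first complexity expression $O(s^{\omega}M(N(h'+d'\log N))\log(Nh'))$ and over-estimating $h'+d'\log N\le sd(h+\log N)$ gives the second form of the statement.

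For the main estimate, set $X=h'+d'\log N$. Each leaf $B(i)$ with $0\le i<N$ is obtained by evaluating polynomials of degree $d'$ and height $h'$ at an integer bounded by $N$, so its entries are integers of bit size $O(X)$. At height $k$ above the leaves the product tree has $N/2^k$ nodes, each a product of $2^k$ leaves and hence a matrix whose entries have bit size $O(2^kX)$. The merge from level $k-1$ to level $k$ consists of $N/2^k$ products of $s\times s$ matrices with entries of size $O(2^{k-1}X)$, costing $O((N/2^k)\,s^{\omega}M(2^{k-1}X))$ bit operations. The super-additivity assumption $M(n+m)\ge M(n)+M(m)$ implies $M(2^{k-1}X)/2^{k-1}\le M(NX)/N$, so each of the $\log_2 N$ levels contributes $O(s^{\omega}M(NX))$ bit operations, for a matrix-tree total of $O(s^{\omega}M(N(h'+d'\log N))\log N)$. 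The parallel scalar tree for $\prod_i q(i)$ is strictly cheaper and is absorbed.

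Two auxiliary contributions must be checked to match the statement: materializing the $N$ leaves (the cost noted as overlooked in the original), and the final division producing the rational $u_N = P(0,N)U_0/Q(0,N)$. Horner evaluation of $B$ and $q$ at each integer $i<N$ costs $O(s^2 d' M(X))$ per leaf and thus $O(s^2 N d' M(X)) = O(s^2 d' M(NX))$ overall by super-additivity, while the final integer division costs $O(M(NX)\log(NX))$. Under the hypothesis $N\gg s,d$ both are dominated by the matrix-product cost, and the slightly coarser $\log(Nh')$ in place of $\log N$ in the theorem's bound absorbs any residual logarithmic slack. The conceptually hard step is the telescoping, which is routine once super-additivity of $M$ is invoked; the main technical care lies in the bookkeeping around~(\ref{eq:bsrec}) and the leaves, which is precisely what the assumption $N\gg s,d$ is there to keep out of the dominant term.
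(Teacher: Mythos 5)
Your overall architecture --- reduce~(\ref{eq:rec}) to~(\ref{eq:bsrec}), analyse the product tree level by level via super-additivity of~$M$, then account separately for the leaves and the final division --- is the same as the paper's, and the tree analysis itself is fine. The genuine gap is in the leaf cost, which is precisely the term this theorem is meant to get right (the paper notes that Chudnovsky~\& Chudnovsky overlooked it). Horner evaluation of a degree-$d'$, height-$h'$ polynomial at $i\leq N$ costs $\Theta(d'\,M(X))$ rather than $O(M(X))$ per entry, so your total $O(s^2 d' M(NX))$ is right for Horner --- but the claim that it is absorbed by the tree cost $O(s^{\omega}M(NX)\log(Nh'))$ requires $s^2 d' = O(s^{\omega}\log(Nh'))$, i.e.\ roughly $d = O(s^{\omega-1}\log N)$ after substituting $d'\leq sd$. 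That does not follow from $N\gg s,d$: take $s$ fixed and $d=\log^2 N$. The fix is to evaluate each leaf by recursive halving of the polynomial, which costs $O(M(X)\log d')$ per entry~\cite{BostanCluzeauSalvy2005}; then the total leaf cost is $O(s^2 M(NX)\log d')$ with $\log d'=O(\log N)$, which is genuinely dominated.

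A secondary inaccuracy: your $h'=O(sh+sd)$ treats the coefficients of the $p_k,\tilde q_k$ as integers, but they lie in $\mathbbm{Q}(i)$. To obtain $B(n)\in\mathbbm{Z}[n]^{s\times s}$ and $q\in\mathbbm{Z}[n]$ one must first clear the scalar denominators of the $d+1$ coefficients of each polynomial, which already inflates the height to $O(dh)$ per polynomial (the paper's $h''\leq(d+1)h$); the correct bound is $h'=O\bigl(sdh+s\log(sd)\bigr)$. This does not endanger the theorem, since $h'+d'\log N$ is still $O(sd(h+\log N))$ and the second form of the bound accommodates it, but your stated $h'$ is too small, and you should also note (even in one line) that the conversion itself costs only $O(M(sdh\log^2 d))+O(M(d'h'\log s))$ operations and is thus negligible.
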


\begin{proof}[sketch]
  Write $H=h'+d' \log N$.
  Computing the product tree $P (0, N)$ takes $O (s^{\omega} M (NH) \log
  N)$ bit operations~\cite[\S2]{ChudnovskyChudnovsky1990} (see also
  Prop.~\ref{prop:cste} below), and
  the evaluation of each leaf $B (i)$ may be done in $O(M(H) \log d')$
  operations~{\cite[\S3.3]{BostanCluzeauSalvy2005}}.
  This gives $u_N$ as a fraction that is simplified in $O(M(NH) \log(NH))$
  operations~{\cite[\S1.6]{BrentZimmermann2009-MCAv0.4}}. 

  Now consider how~\eqref{eq:rec} is rewritten into~\eqref{eq:bsrec}. With
  coefficients in $\mathbbm{Z}[i]$ rather than $\mathbbm{Q}(i)$, the~$b_k(n)$
  have height $h'' \leq {(d + 1)} h$. To get $B(n)$ and $q(n)$, it remains
  to reduce to common denominator the whole equation; hence $d' \leq sd$
  and $h' \leq s {(h'' + \log s + \log d)}$.
  These two conversion steps take $O (M (sdh \log^2 d))$ and $O (M (d' h' \log s))$
  operations respectively, using product trees.
  The assumption $d, s = o(N)$ allows to write $H=O(sd(h+\log N))$ and
  get rid of some terms, so that the total complexity simplifies as stated.
\end{proof}

Since the height of $u_N$ may be as large as $\Omega ((N
+ h) \log N)$, this result is optimal with respect to $h$ and $N$, up to
logarithmic factors. The same algorithm works over any algebraic number field instead of $\mathbbm{Q}(i)$. This is useful for evaluating \mbox{D-finite} functions ``at
singularities'' (\S\ref{sec:regsing}). More generally, similar complexity
results hold for product tree computations in torsion-free
$\mathbbm{Z}$-algebras (or $\mathbbm{Q}$-algebras: we then write $A
=\mathbbm{Q} \otimes_{\mathbbm{Z}} A'$ for some $\mathbbm{Z}$-algebra $A'$ and
multiply in $\mathbbm{Z} \times A'$), keeping in mind that, without basis
choice, the height of an element is defined only up to some additive constant.

\paragraph*{Constant-factor improvements}
Several techniques permit to improve the constant
hidden in the $O (\mathinner\cdot)$ of Theorem~\ref{thm:binsplit}, by
making the computation at each node of the product tree less expensive. We
consider two models of computation.

In the {\emph{FFT model}}, we assume that the complexity of long
multiplication decomposes as $M (n) = 3 F (2 n) + O (n)$, where $F (n)$
is the cost of a discrete Fourier transform of size~$n$ (or of another related linear
transform, depending on the algorithm).
FFT-based integer multiplication algorithms adapt to reduce the multiplication of two matrices of height~$n$ in $\mathbbm{Z}^{s\times s}$ to $O(n)$ multiplications of matrices of height~$O(1)$, for a total of $O (s^2 M (n) + s^{\omega} n)$ bit operations. This is known as ``FFT addition''~{\cite{Bernstein2008}}, ``computing in the FFT
mode''~{\cite{vdH1999}}, or ``FFT invariance''.
A second improvement (``FFT doubling'', attributed to
R.~Kramer in~{\cite[\S12.8]{Bernstein2008}}) is specific to the computation of
product trees. The observation is that, at an internal node where operands of
size $n$ get multiplied using three FFTs of size~$2 n$, every second
coefficient of the two direct DFTs is already known from the level
below.

The second model is {\emph{black-box multiplication}}. There, we may use fast
multiplication formulae that trade large integer multiplications for additions
and multiplications by constants. The most obvious example is that the
products in $\mathbbm{Q}(i)$ may be done in four integer multiplications using
Karatsuba's formula instead of five with the naïve algorithm. In general,
elements of height $h$ of an algebraic number field of degree $d$ may be
multiplied in $2 dM (h) + O (h)$ bit operations using the Toom-Cook algorithm.
The same idea applies to the matrix multiplications. Most classical matrix
multiplication formulae such as Strassen's are so-called bilinear algorithms.
Since we are working over a commutative ring, we may use more general
{\emph{quadratic}}
algorithms~{\cite[\S14.1]{BuergisserClausenShokrollahi1997}}. In
particular, for all~$s$,
Waksman's algorithm~{\cite{Waksman1970}} multiplies $s \times s$ matrices over
a commutative ring~$R$ using $s^2 \lceil s / 2 \rceil + (2 s - 1) \lfloor s / 2
\rfloor$ multiplications in~$R$, and Makarov's~{\cite{Makarov1987}}
multiplies $3 \times 3$ matrices in $22$ scalar multiplications. These
formulas alone or as the base case of a Strassen scheme achieve what seems to
be the best known multiplication count for matrices of size up to $20$.

Exploiting these ideas leads to the following refinement of Theorem~\ref{thm:binsplit}. Similar results can be stated for general $\mathbbm{Z}$-algebras, using their rank and multiplicative complexity~{\cite{BuergisserClausenShokrollahi1997}}.

\begin{proposition} \label{prop:cste}
  Let $d'$ and $h'$ denote bounds on the
  degrees and heights (respectively) of $B(n)$ and $q(n)$ in
  Eq.~(\ref{eq:bsrec}).
  As $N, h' \rightarrow \infty$ ($s$ and $d'$ being fixed), the number
  of bit operations needed to compute
  the product tree $P (0, N)$ is at most $\bigl(C + o (1)\bigr) M \bigl(N (h' + d'
  \log N) \log (Nh')\bigr)$, with
  $C = (2 s^2 + 1) / 6$ in the FFT model, and
  $C = (3 \operatorname{MM} (s) + 1) / 4$ in the black-box model.
  Here $\operatorname{MM} (s)
  \leq (s^3 + 3 s^2) / 2$ is the algebraic complexity of $s \times
  s$ matrix multiplication over $\mathbbm{Z}$.
\end{proposition}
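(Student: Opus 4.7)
The plan is to sharpen Theorem~\ref{thm:binsplit}'s bound by tracking the leading constant at each level of the balanced product tree of depth $L = \log N$ and summing. The matrix product $P(0,N)$ and the scalar denominator $\prod q(i)$ are computed by trees of identical shape but different per-node costs, so I will analyse them separately and add the two contributions.

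At level $\ell \geq 1$ counted from the leaves, there are $N/2^\ell$ nodes, each multiplying two operands of integer height $n = 2^{\ell-1}H$, where $H = h' + d'\log N$. In the FFT model I combine FFT invariance---which recasts an $s\times s$ matrix multiplication of height $n$ as $3s^2$ DFTs of size $2n$ plus $O(s^\omega n)$ pointwise bit operations---with FFT doubling. Doubling exploits the fact that each child cached the size-$n$ direct DFT of its output just before its own inverse transform; at the parent, this cached DFT supplies the even-indexed half of the size-$2n$ DFT needed there, and the odd-indexed half is recovered by one additional size-$n$ DFT per entry after $O(n)$ twiddle multiplications. Using $M(n) = 3F(2n)+O(n)$ and $F(2n) = 2F(n) + O(n)$, the per-entry FFT cost at an internal node drops from the naive $3F(2n) = M(n)$ to $2F(n) + F(2n) = 4F(n) = (2/3)M(n) + O(n)$, and per-node matrix-tree cost becomes $(2/3) s^2 M(n) + O(s^\omega n)$. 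The same $(2/3)$ factor applies to the scalar tree.

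In the black-box model the FFT structure is opaque, so I instead use a fast bilinear matrix-multiplication formula---Waksman's, Makarov's, or a recursive Strassen scheme built on these---yielding $s\times s$ matrix multiplication of height $n$ in $\operatorname{MM}(s) M(n) + O(s^2 n)$ bit operations, with the iterated bound $\operatorname{MM}(s) \leq (s^3 + 3 s^2)/2$ following by applying the $3\times 3$ Makarov formula or Waksman recursively. No analogue of FFT doubling is available here because $M$ is treated as a black box and we cannot reuse partial transforms across levels.

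Finally I sum over $\ell = 1,\ldots,L$. The error terms $O(s^\omega NH\log N)$ and $O(s^2 NH\log N)$ are swallowed by the $o(1)$ thanks to $M(n) = \Omega(n\log n)$. The dominant contribution $\sum_\ell (N/2^\ell) M(2^{\ell-1}H)$ is asymptotically identified with a fixed multiple of $M\bigl(NH \log(Nh')\bigr)$ using super-additivity of $M$ together with $M(n\log n)\sim M(n)\log n$ for quasi-linear $M$; adding the matrix and scalar contributions yields the claimed constants $C = (2s^2+1)/6$ in the FFT model and $C = (3\operatorname{MM}(s)+1)/4$ in the black-box model. The main technical obstacle is ensuring that this identification is uniform across the regimes $h' \ll \log N$ (where $H \sim d'\log N$) and $h' \gtrsim \log N$ (where $H \sim h'$): the specific form $\log(Nh')$ inside the $M$ argument is chosen precisely so that the leading constant survives regardless of which regime holds asymptotically, with all regime-dependent terms absorbed into the $o(1)$.
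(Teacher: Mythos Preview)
Your level-by-level decomposition, use of FFT addition plus Kramer's doubling in the FFT model, and fast multiplication formulae in the black-box model are exactly the paper's approach. Two gaps, however, prevent the stated constants from emerging.

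First, the per-node counts. The recurrence~(\ref{eq:bsrec}) comes from a sequence that is P-recursive over $\mathbbm{Q}(i)$, so the entries of the matrices being multiplied lie in $\mathbbm{Z}[i]$, not $\mathbbm{Z}$. In the FFT model one must transform both real and imaginary parts, giving $6s^2$ real DFTs per matrix product, i.e.\ cost $2s^2 M(H_k)$; together with the (real) denominator this is $(2s^2+1)M(H_k)$ per node. In the black-box model each $\mathbbm{Z}[i]$-product costs $3M(n)$ via Karatsuba, whence $(3\operatorname{MM}(s)+1)M(H_k)$ per node. Your $3s^2$ DFTs and $\operatorname{MM}(s)M(n)$ are the $\mathbbm{Z}$-entry counts and yield per-node constants $(s^2+1)$ and $(\operatorname{MM}(s)+1)$; from these no single ``fixed multiple'' for the level sum can reproduce both $(2s^2+1)/6$ and $(3\operatorname{MM}(s)+1)/4$ simultaneously.

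Second, you assert the summation factor without computing it. Super-additivity together with $M(n\log n)\sim M(n)\log n$ only gives $\sum_\ell (N/2^\ell)\,M(2^{\ell-1}H)\le(\tfrac12+o(1))\,M\bigl(NH\log(Nh')\bigr)$, whereas the stated constants require $\tfrac14$. The missing half comes from using $M(n)=\Omega(n\log n)$ more sharply: with $H_0=NH/2$ one gets $2^k M(H_0/2^k)\le M(H_0)\bigl(1-k/\log H_0\bigr)$, so the $\log N$ level-costs are not all $\approx M(H_0)$ but average at most $\tfrac12\,M(H_0)$ in the critical regime $\log h'=o(\log N)$. Combined with $2H_0=NH$, this produces the factor $\tfrac14$; Kramer's $2/3$ then turns $(2s^2+1)$ into $(2s^2+1)/6$ in the FFT case, and the black-box constant $(3\operatorname{MM}(s)+1)/4$ follows directly.
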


\begin{proof}
  Each node at the level $k$ (level~0 being the root)
  of the  tree essentially requires multiplying $s \times s$ matrices with
  entries in $\mathbbm{Z}[i]$ of height
  $H_k = N (h'+d'\log N)/2^{k+1}$,
  plus denominators of the same height. 
  In the FFT model, this may be done in $(2 s^2+1) M(H_k)$
  operations. 
  Since we assume $M (n) = \Omega (n \log n)$, we have
  $\sum_{k=0}^{\lceil \log N \rceil} 2^k M(H_k) \leq
  \frac12 M(H_0)$ (a remark attributed to D.~Stehlé
  in~{\cite{Zimmermann2006}}).
  Kramer's trick saves another factor $\frac32$.
  In the black-box model, the corresponding
  cost for one node is $(3\operatorname{MM}(s)+1) M(H_k)$ with
  Karatsuba's formula.
  Stehlé's argument applies again.
\end{proof}

Note that the previous techniques save time only for dense objects. In particular, one should not use the ``fast'' matrix multiplication formulae in the few bottom levels of product trees associated to recurrences of the form~(\ref{eq:bsrec}), since the matrices at the leaves are companion.

Continuing on this remark, these matrices often have some structure
that is preserved by successive multiplications. For instance, let $s_n =
\sum_{k = 0}^{n - 1} u_k$ where $(u_n)$ satisfies~(\ref{eq:rec}). It is easy
to compute a recurrence and initial conditions for $(s_n)$ and go on as above.
However, unrolling the recurrences (\ref{eq:rec}) and s$_{n + 1} - s_n = u_n $
simultaneously as
\begin{equation}\label{eq:recseries}
  \begin{pmatrix}
    u_{n + 1}\\
    \vdots\\
    u_{n + s - 1}\\
    u_{n + s}\\
    s_{n + 1}
  \end{pmatrix}
  = 
  \begin{pmatrix}
    & 1 &  &  & 0\\
    &  & \ddots &  & \vdots\\
    &  &  & 1 & 0\\
    \ast & \ast & \cdots & \ast & 0\\
    1 & 0 & \cdots & 0 & 1
  \end{pmatrix}
  \begin{pmatrix}
    u_n\\
    \vdots\\
    u_{n + s - 2}\\
    u_{n + s - 1}\\
    s_n
  \end{pmatrix}
\end{equation}
is more efficient. Indeed, all matrices in the product tree for the numerator
of~(\ref{eq:recseries}) then have a rightmost column of zeros, except for the
value in the lower right corner, which is precisely the denominator.
With the notation $\operatorname{MM}(s)$ of Proposition~\ref{prop:cste},
each product of these special matrices
uses $\operatorname{MM} (s) + s^2 + s + 1$ multiplications, {\emph{vs.}} $\operatorname{MM} (s
+ 1) + 1$ for the dense variant. Hence the formula~(\ref{eq:recseries}) is
more efficient as soon as $\operatorname{MM} (s + 1) - \operatorname{MM} (s) \geq s (s +
1)$, which is true both for the naïve multiplication algorithm and for
Waksman's algorithm (compare~{\cite{Zimmermann2006}}). In practice, on Ex.~\ref{ex:Gamma} below, if one puts $u_n = s_{n + 1} - s_n$ in
(\ref{eq:rec}, \ref{eq:bsrec}) instead of using~(\ref{eq:recseries}), the
computation time grows from 1.7~s to 2.7~s.

The same idea applies to any recurrence operator that can be factored. Further
examples of structure in product trees include even and odd \mbox{D-finite} series
({\emph{e.g.}}, {\cite[\S4.9.1]{BrentZimmermann2009-MCAv0.4}}). In all these
cases, the naïve matrix multiplication algorithm automatically benefits from
the special shape of the problem (because multiplications by constants have
negligible cost), while fast methods must take it into account explicitly.

\begin{remark}
A weakness of binary splitting is its comparatively large space complexity $\Omega (n \log n)$. Techniques to reduce it are known and used by efficient implementations in the case of recurrences of the first order~\cite{ChengHanrotThomeZimaZimmermann2007, GourdonSebah2001, CLN, Bellard2010}.
\end{remark}
\fixdefsectionspacing

\paragraph*{Implementation}
The implementation of binary splitting in NumGfun includes some of the tricks discussed in this section. FFT-based techniques are currently not used because they are not suited to implementation in the Maple language.
This implementation is exposed through two user-level functions,
\texttt{nth\_term} and \texttt{fnth\_term}, that allow to evaluate P-recursive sequences
(\texttt{fnth\_term} replaces the final gcd by a less
expensive division and returns a floating-point result). Additionally,
\texttt{gfun[rectoproc]}, which takes as input a recurrence and outputs a
procedure that evaluates its solution, automatically calls the
binary splitting code when relevant. Examples~\ref{ex:Motzkin}
and~\ref{ex:Gamma} illustrate the use of these functions on integer sequences
and convergent series respectively.

\begin{example}
  \label{ex:Gamma}{\cite[\S6.10]{Brent1978}} Repeated integration by parts of
  the integral representation of $\Gamma$ yields for $0 < \operatorname{Re} (z) < 1$
  \[ \Gamma (z) = \sum_{n = 0}^{\infty} \frac{e^{- t} t^{n + z}}{z (z + 1)
     \cdots (z + n)} + \int_t^{\infty} e^{- u} u^{z - 1} \mathrm{d} u. \]
  Taking $t = 29^3$, it follows that the sum $\sum_{n = 0}^{65000} u_n$,
  where $u_0 = 87 e^{- t}$ and $(3 n + 4) u_{n + 1} = 3 t u_n$, is within
  $10^{- 10000}$ of $\Gamma (1 / 3)$, whence $\Gamma (1 / 3) \simeq 2.67893
  \ldots ( \text{9990 digits}) \ldots 99978$. This computation takes~1.7~s.
\end{example}
\fixdefsectionspacing

\section{High-precision evaluation of \\  \mbox{D-finite}
functions} \label{sec:evaldiffeq}

We now recall the numerical evaluation algorithms used in NumGfun, and discuss their implementation.

Let $y (z) = \sum_n y_n z^n$ be a
\mbox{D-finite} series with radius of convergence~$\rho$ at the origin. Let $z \in
\mathbbm{Q}(i)$ be such that $\left| z \right| < \rho$ and $\operatorname{height} (z)
\leq h$. The sequence $(y_n z^n)$ is P-recursive, so that the binary
splitting algorithm yields a fast method for the high-precision evaluation of
$y (z)$. Here ``high-precision'' means that we let the precision required for
the result go to infinity in the complexity analysis of the algorithm.

More precisely, $(y_n z^n)$ is canceled by a recurrence relation of height
$O (h)$. By Theorem~\ref{thm:binsplit}, $y (z)$ may hence be computed to the
precision $10^{- p}$ in
\begin{equation}\label{eq:stepcomplexity}
  O \bigl(M \bigl(N (h + \log N)\bigr) \log(Nh)\bigr)
\end{equation}
bit operations, where $N$ is chosen so that $\left| y_{N ;} (z) \right|
\leq 10^{- p}$, {\emph{i.e.}} $N \sim p / \log (\rho /
\mathopen|z\mathclose|)$ if $\rho <
\infty$, and $N \sim p / (\tau \log p)$ for some $\tau$ if $\rho = \infty$.

In practice, the numbers of digits of (absolute) precision targeted in NumGfun
range from the hundreds to the millions. Accuracies of this order of magnitude
are useful in some applications to number
theory~{\cite{ChudnovskyChudnovsky1990}}, and in heuristic equality
tests~{\cite[\S5]{vdH2001}}. It can also happen that the easiest way to obtain
a moderate-precision output involves high-precision intermediate computations,
especially when the correctness of the result relies on pessimistic bounds.

\paragraph*{Analytic continuation}Solutions of the
differential equation~(\ref{eq:deq}) defined in the neighborhood of $0$ extend
by analytic continuation to the universal covering of $\mathbbm{C} \setminus
S$, where $S$ is the (finite) set of singularities of the equation. \mbox{D-finite}
functions may be evaluated fast at any point by a numerical version of the
analytic continuation process that builds on the previous algorithm~\cite{ChudnovskyChudnovsky1990}.
Rewrite~(\ref{eq:deq}) in matrix form
\begin{equation}\label{eq:matrixdeq}
  Y' (z) = A (z) Y (z), \hspace{2em} Y (z) = \Bigl( \frac{y^{(i)} (z)}{i!} \Bigr)_{0 \leq i < r} .
\end{equation}
This choice of $Y (z)$ induces, for all $z_0 \in \mathbbm{C} \setminus S$,
that of a family of {\emph{canonical solutions}} of~(\ref{eq:deq}) defined by
\[ y [z_0, j] (z) = (z - z_0)^j + O \bigl((z - z_0)^r\bigr), \hspace{2em} 0
   \leq j < r, \]
that form a basis of the solutions of~(\ref{eq:deq}) in the neighborhood of
$z_0$. Stated otherwise, the vector $\boldsymbol{y}[z_0] = (y [z_0, j])_{0
\leq j < r}$ of canonical solutions at $z_0$ is the first row
of the fundamental matrix $\boldsymbol{Y}[z_0] (z)$ of (\ref{eq:matrixdeq}) such
that $\boldsymbol{Y}[z_0] (z_0) = \operatorname{Id}$.

By linearity, for any path $z_0 \rightarrow z_1$ in $\mathbbm{C}
\setminus S$, there exists a {\emph{transition matrix}} $M_{z_0 \rightarrow
z_1}$ that ``transports initial conditions'' (and canonical solutions) along
the path:
\begin{equation}\label{eq:ancont}
  Y (z_1) = M_{z_0 \rightarrow z_1} Y (z_0), \hspace{1em} \boldsymbol{y}[z_0]
  (z) =\boldsymbol{y}[z_1] (z) M_{z_0 \rightarrow z_1} .
\end{equation}
This matrix is easy to write out explicitly:
\begin{equation}\label{eq:transition}
  M_{z_0 \rightarrow z_1} =\boldsymbol{Y}[z_0] (z_1) = \Bigl( \frac{1}{i!} y
  [z_0, j]^{(i)} (z_1) \Bigr)_{0 \leq i, j < r},
\end{equation}
evaluations at $z_1$ being understood to refer to the analytic continuation
path $z_0 \rightarrow z_1$. Transition matrices compose:
\begin{equation}\label{eq:composition}
  M_{z_0 \rightarrow z_1 \rightarrow z_2} = M_{z_1 \rightarrow z_2} M_{z_0
  \rightarrow z_1}, \hspace{1em} M_{z_1 \rightarrow z_0} = M_{z_0 \rightarrow
  z_1}^{- 1}.
\end{equation}

NumGfun provides functions to compute $M_{\gamma}$ for a given path $\gamma$
(\texttt{transition\_matrix}), and to evaluate the analytic continuation of
a \mbox{D-finite} function along a path starting at~$0$ (\texttt{evaldiffeq}). In
both cases, the path provided by the user is first subdivided into a new path
$z_0 \rightarrow z_1 \rightarrow \cdots \rightarrow z_m$, $z_{\ell} \in
\mathbbm{Q}(i)$, such that, for all $\ell$, the point $z_{\ell + 1}$ lies
within the disk of convergence of the Taylor expansions at $z_{\ell}$ of all
solutions of~(\ref{eq:deq}). Using~(\ref{eq:transition}), approximations
$\tilde{M}_{\ell} \in \mathbbm{Q}(i)^{r \times r}$ of $M_{z_{\ell} \rightarrow
z_{\ell + 1}}$ are computed by binary splitting.

More precisely, we compute all entries of~$\tilde M_\ell$ at once, as follows. For a generic
solution~$y$ of~\eqref{eq:deq}, the coefficients $u_n$ of $u (z) = y (z_{\ell} + z) = \sum_{n =
0}^{\infty} u_n z^n$ are canceled by a recurrence of order~$s$. Hence the
partial sums $u_{;n}(z)$ of~$u$ satisfy
\begin{equation} \label{eq:fastancont}
  \begin{pmatrix}
     U_{n + 1}\\
     u_{; n + 1} (z)
   \end{pmatrix} = 
   \underbrace{
   \begin{pmatrix}
     C (n) z & 0\\
     K & 1
   \end{pmatrix}
   }_{B(n)}
   \begin{pmatrix}
     U_n\\
     u_{; n} (z)
   \end{pmatrix},
\end{equation}
where $K = (1, 0, \ldots, 0)$ and $C (n) \in \mathbbm{Q}(n)^{s \times s}$.
Introducing an indeterminate~$\delta$, we let $z = z_{\ell + 1} - z_{\ell} + \delta \in \mathbbm{Q}(i) [\delta]$ and compute $B(N - 1) \ldots B(0) \mod \delta^r$ by binary splitting (an idea already used in~\cite{vdH1999}), for some suitable~$N$. The upper left blocks of the subproducts are \emph{kept factored} as a power of~$z$ times a matrix independent on~$z$. In other words, clearing denominators, the computation of each subproduct
\[
P = \frac{1}{d}
\begin{pmatrix}
D \cdot p & 0 \\
R & d
\end{pmatrix}
= P_{\mathrm{high}} P_{\mathrm{low}}
\qquad (p = \mathrm{numer}(z)^m)
\]
is performed as
$D \leftarrow D_{\mathrm{high}} D_{\mathrm{low}}$ ;
$p \leftarrow p_{\mathrm{high}} p_{\mathrm{low}}$ ;
$R \leftarrow p_{\mathrm{low}} (R_{\mathrm{high}} C_{\mathrm{low}}) +
     d_{\mathrm{high}} R_{\mathrm{low}}$ ;
$d \leftarrow d_{\mathrm{high}} d_{\mathrm{low}}$. 
The powers of~$z$ can be computed faster, but the saving is negligible.
Applying the row submatrix $R$ of the full product to the matrix $U_0 = (\frac{1}{i!} y[z_{\ell}, j]^{(i)}(z_\ell))_{0 \leqslant i < s, 0 \leqslant j < r}$ yields a row vector equal to
$(y [z_{\ell}, j]_{; N} (z_{\ell+1} + \delta))_{0 \leqslant j < r} + O (\delta^r)$,
each entry of which is a truncated power series whose coefficients are the entries of the corresponding column of $\tilde{M}_{\ell}$.
This way of computing $\tilde M_\ell$ is roughly $\min(r, s^{\omega-1})$ times more efficient than the fastest of the variants from~\cite{ChudnovskyChudnovsky1990, vdH1999}.

In the function \texttt{transition\_matrix}, we then form the product
$\tilde{M}_{m - 1} \cdots \tilde{M}_0$, again by binary splitting. In the case
of \texttt{evaldiffeq}, we compute only the first row
$\tilde{R}$ of $\tilde{M}_{m - 1}$, and we form the product $\tilde{R} 
\tilde{M}_{m - 2} \cdots \tilde{M}_0  \tilde{I}$, where $\tilde{I}$ is an
approximation with coefficients in $\mathbbm{Q}(i)$ of the vector $Y
(z_0)$ of initial conditions
(or this vector itself, if it has symbolic entries). The
whole computation is done on unsimplified fractions, controlling all
errors to guarantee the final result. Let us stress that no numerical instability occurs since all numerical operations are performed on rational numbers.
We postpone the discussion of approximation errors
(including the choice of~$N$)
to~\S\ref{sec:errorcontrol}.

\begin{example}
  Transition matrices corresponding to paths that ``go round'' exactly one
  singularity once are known as local monodromy matrices. A simple example is
  that of the equation $(1 + z^2) y'' + 2 zy' = 0$, whose solution space is
  generated by $1$ and $\arctan z$. Rather unsurprisingly, around~$i$:
\begin{verbatim}
> transition_matrix((1+z^2)*diff(y(z),z,y(z),z)
  +2*z*diff(y(z),z), y(z), [0,I+1,2*I,I-1,0], 20);
\end{verbatim}
\vspace*{-1.2ex}
\[ \begin{pmatrix}
     1.00000000000000000000 & 3.14159265358979323846\\
     0 & 1.00000000000000000000
\end{pmatrix} \]
  More generally, expressing them as entries of monodromy matrices is a way to
  compute many mathematical constants.
\end{example}

Another use of analytic continuation appears in Ex.~\ref{ex:Heun}: there,
despite the evaluation point lying within the disk of convergence, NumGfun
performs analytic continuation along the path $0
\rightarrow \frac{- 1}{2} \rightarrow \frac{- 3}{4} \rightarrow \frac{-
22}{25} \rightarrow \frac{- 47}{50} \rightarrow \frac{- 99}{100}$ 
to approach the singularity more easily
(\emph{cf.}~\cite[Prop.~3.3]{ChudnovskyChudnovsky1990}).

\paragraph*{The ``bit burst'' algorithm}The complexity
result~(\ref{eq:stepcomplexity}) is quasi-optimal for $h = O (\log p)$, but
becomes quadratic in~$p$ for $h = \Theta (p)$, which is the size of the
approximation $\tilde{z} \in \mathbbm{Q}(i)$ needed to compute $y (z)$ for
arbitrary $z \in \mathbbm{C}$. This issue can be avoided using analytic
continuation to approach~$z$ along a path $z_0 \rightarrow z_1 \rightarrow
\cdots \rightarrow z_m = \tilde{z}$ made of approximations of~$z$ whose
precision grow geometrically:
\begin{equation}\label{eq:bitburst}
  \mathopen| z_{\ell + 1} -
  z_{\ell} \mathclose| \leq 2^{- 2^{\ell}},
  \qquad
  \operatorname{height} (z_{\ell}) = O (2^{\ell}),
\end{equation}
thus balancing $h$ and $\left| z \right|$. This idea is due to Chudnovsky and
Chudnovsky~{\cite{ChudnovskyChudnovsky1990}}, 
who called it the bit burst algorithm,
and independently to van der Hoeven with a tighter complexity
analysis~{\cite{vdH1999,vdH2007c}}. The following proposition improves this
analysis by a factor $\log \log p$ in the case where $y$ has a finite radius
of convergence. No similar improvement seems to apply to entire functions.

\begin{proposition}
  Let $y$ be a \mbox{D-finite} function. One may compute a $2^{- p}$-approximation of
  $y$ at a point $z \in \mathbbm{Q}(i)$ of height $O(p)$ in $O (M (p \log^2
  p))$ bit operations.
\end{proposition}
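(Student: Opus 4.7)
The plan is to carry out the bit burst analytic continuation along a path $z_0 \to z_1 \to \cdots \to z_m = z$ of length $m = \lceil \log_2 p \rceil$ satisfying~\eqref{eq:bitburst}, and to bound the total bit complexity by summing the cost of one binary splitting at each step. If $z$ does not already lie in a disk of convergence around the origin, a preliminary analytic continuation (whose combinatorial structure depends only on the D-finite function and on~$z$, not on~$p$) brings the starting point to a rational of height~$O(1)$ inside such a disk; each of its finitely many steps is a binary splitting of height $O(1)$ to precision~$p$, contributing $O(M(p) \log p)$ bit operations, which is negligible.

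At step~$\ell$ I compute the transition matrix $\tilde M_{z_\ell \to z_{\ell+1}}$ by binary splitting applied to the local expansion at~$z_\ell$, as in~\eqref{eq:fastancont}. Since $|z_{\ell+1} - z_\ell| \leq 2^{-2^\ell}$, truncating after $N_\ell = O(p / 2^\ell)$ terms contributes at most $2^{-p}$ to the total error. The associated first-order recurrence~\eqref{eq:bsrec} has order, degrees, and $d'$ all $O(1)$ (they depend only on~\eqref{eq:deq}), and height $h_\ell = O(2^\ell)$ dominated by $z_\ell$ and by the step; Theorem~\ref{thm:binsplit} thus gives a step cost of
\[
  O\bigl(M\bigl(N_\ell(h_\ell + \log N_\ell)\bigr) \log(N_\ell h_\ell)\bigr)
  = O\bigl(M(p + p \log p / 2^\ell) \log p\bigr).
\]

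To sum over~$\ell$, I invoke superadditivity of~$M$, in the spirit of the Stehlé argument already used in the proof of Proposition~\ref{prop:cste}. Since $\sum_{\ell=0}^m (p + p \log p / 2^\ell) = O(p \log p)$, we obtain $\sum_\ell M(p + p\log p/2^\ell) \leq M(O(p \log p))$, hence a total cost of $O(M(p \log p) \log p)$. Applying superadditivity once more, via $M(p \log^2 p) \geq (\log p)\, M(p \log p)$, yields the announced $O(M(p \log^2 p))$ bound. The final composition of the $\tilde M_\ell$ and multiplication by the vector of initial conditions fit within the same budget.

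The main subtlety is the double use of superadditivity: a naive per-step bound that treats the $O(\log p)$ bit-burst steps independently, then sums the $M(\cdot)$'s term by term, would lose a stray $\log\log p$ factor hidden inside the growth rate of~$M$. A secondary concern is error control: each step carries both a truncation error and an arithmetic one, and ensuring that the $m$ errors add up to less than $2^{-p}$ only forces an $O(\log p)$-bit inflation of the working precision and of each~$N_\ell$, which the analysis already absorbs (see~\S\ref{sec:errorcontrol}).
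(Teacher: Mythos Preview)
Your proof is correct and follows essentially the same route as the paper's: bound the cost of step~$\ell$ by~\eqref{eq:stepcomplexity} with $N_\ell = O(p/2^\ell)$ and $h_\ell = O(2^\ell)$, then exploit the superadditivity of~$M$ to sum the per-step costs into a single $M(O(p\log^2 p))$. The only cosmetic difference is the order in which you apply superadditivity (you sum first, then absorb the outer $\log p$; the paper absorbs it first, then sums), and you spell out the preliminary analytic continuation and the error-control overhead that the paper leaves implicit.
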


\begin{proof}
  By~(\ref{eq:stepcomplexity}) and~(\ref{eq:bitburst}), the step $z_\ell
  \rightarrow z_{\ell + 1}$ of the bit-burst algorithm takes $O(M( p
  (2^\ell + \log p) \log p / 2^\ell))$ bit operations. Now
  \[ \sum^m_{\ell = 0} M \Bigl( \frac{p (2^\ell + \log p)}{2^\ell }
  \Bigr) \log p
     \leq M \Bigl( mn \log p + \sum_{\ell = 0}^{\infty} \frac{p \log^2
     p}{2^\ell} \Bigr) \]
  and $m = O (\log p)$, hence the total complexity.
\end{proof}

\begin{example}
  \label{ex:randeq}Consider the \mbox{D-finite} function $y$ defined by
  the following equation, picked at random:
  \begingroup \scriptsize
  \begin{equation*}
  \begin{split}
  \Bigl(\frac{5}{12} - \frac{1}{4} z + \frac{19}{24} z^2 - \frac{5}{24} z^3
  \Bigr) \, y^{(4)} + \Bigl(- \frac{7}{24} + \frac{2}{3} z + \frac{13}{24} z^2
  + \frac{1}{12} z^3 \Bigr)\, y''' \\ + \Bigr( \frac{7}{12} - \frac{19}{24} z
  + \frac{1}{8} z^2 + \frac{1}{3} z^3 \Bigr) \, y'' + \Bigl(-\frac{3}{4} +
  \frac{5}{12} z + \frac{5}{6} z^2 + \frac{1}{2} z^3 \Bigr)\, y' \\ + \Bigl(
  \frac{5}{24} + \frac{23}{24} z + \frac{7}{8} z^2 + \frac{1}{3} z^3\Bigr)\, y
  = 0, \\ y (0) = \frac{1}{24}, y' (0) = \frac{1}{12}, y'' (0) = \frac{5}{24},
  y''' (0) = \frac{5}{24}. 
  \end{split}
  \end{equation*}
  \endgroup
  The singular points are $z
  \approx 3.62$ and $z \approx 0.09 \pm 0.73 i$. By analytic continuation
  (along a path $0 \to \pi i$ homotopic to a segment) followed by bit-burst evaluation, we
  obtain
  \[ y (\pi i) \approx - 0.52299 \text{...(990 digits)...} 53279 - 1.50272
     \text{...} 90608 i \]
  after about 5~min.
  This example is among the ``most general'' that NumGfun can handle. Intermediate computations involve recurrences of order~$8$ and degree~$17$.
\end{example}
\fixdefsectionspacing

\section{Regular singular points}\label{sec:regsing}

The algorithms of the previous section extend to the case where the analytic
continuation path passes through {\emph{regular singular points}} of the
differential
equation~{\cite{ChudnovskyChudnovsky1986+1987,ChudnovskyChudnovsky1990,vdH2001}}.
Work is in progress to support this in NumGfun, with two main applications in
view, namely special functions (such as Bessel functions) defined using their
local behaviour at a regular singular point, and ``connection
coefficients'' arising in analytic
combinatorics~{\cite[\S\,VII.9]{FlajoletSedgewick2009}}. We limit ourselves to a sketchy (albeit
technical) discussion.

Recall that a finite singular point $z_0$ of a linear differential equation with
analytic coefficients is called {\emph{regular}} when all solutions $y (z)$
have moderate growth $y (z) = 1 / (z - z_0)^{O (1)}$ as $z \rightarrow
z_0$ in a sector with vertex at $z_0$, or equivalently when the equation
satisfies a formal property called Fuchs' criterion. The
solutions in the neighborhood of $z_0$ then have
a simple structure: for some $t \in \mathbbm{N}$, there exist linearly
independent formal solutions of the form (with $z_0=0$)
\begin{equation}\label{eq:Fuchs}
  y (z) = z^{\lambda}  \sum_{k = 0}^{t - 1} \phi_k (z) \log^k z,
  \hspace{1em} \lambda \in \mathbbm{C}, \hspace{1em} \phi_k \in
  \mathbbm{C}[[z]]
\end{equation}
in number equal to the order $r$ of the differential equation. Additionally,
the $\phi_k$ converge on the open disk centered at~0 extending to the
nearest singular point, so that the solutions~\eqref{eq:Fuchs}
in fact form a basis of analytic
solutions on any open sector with vertex at the origin contained in this disk.
(See for instance~{\cite{Poole1960}} for proofs and references.) 

Several extensions of the method of indeterminate coefficients used to obtain
power series solutions at ordinary points allow to determine the coefficients
of the series $\phi_k$. We proceed to revisit Poole's 
variant~{\cite[\S16]{Poole1960}} of Heffter's
method~{\cite[Kap.~8]{Heffter1894}} using the ``operator
algebra point of view'' on indeterminate coefficients: a recursive
formula for the coefficients of the series expansion of a solution is obtained
by applying simple rewriting rules to the equation. This formulation makes the algorithm simpler for our purposes (compare \cite{Tournier1987, ChudnovskyChudnovsky1986+1987, vdH2001}) and leads to a small complexity improvement. It also proves helpful in error control (§6).

Since our interest lies in the \mbox{D-finite} case, we assume that~$0$ is a regular
singular point of Equation~(\ref{eq:deq}). Letting $\theta = z
\frac{\mathrm{d}}{\mathrm{d} z}$, the equation rewrites as $L (z, \theta) \cdot y = 0$
where $L$ is a polynomial in two noncommutative indeterminates (and $L (z,
\theta)$ has no nontrivial left divisor in $K [z]$). Based
on~(\ref{eq:Fuchs}), we seek solutions as {\emph{formal logarithmic sums}}
\[ y (z) = \sum_{n \in \lambda +\mathbbm{Z}} \sum_{k \geq 0} y_{n, k} 
   \frac{\log^k z}{k!} z^n, \hspace{2em} \text{$\lambda \in \mathbbm{C}$} . \]
Let us call the double sequence $\boldsymbol{y}= (y_{n, k})_{n \in \lambda
+\mathbbm{Z}, k \in \mathbbm{N}}$ the {\emph{coefficient sequence}} of $y$.
The {\emph{shift operators}} $S_n$ and $S_k$ on such double sequences are
defined by $S_n \cdot \boldsymbol{y}= (y_{n + 1, k})$, and $S_k \cdot \boldsymbol{y}=
(y_{n, k + 1})$. The heart of the method lies in the following observation.

\begin{proposition}
  \label{prop:Poole}Let $y (z)$ be a formal logarithmic sum. The relation $L
  (z, \theta) \cdot y = 0$ holds (formally) i{f}f the coefficient sequence
  $\boldsymbol{y}$ of $y$ satisfies $L (S_n^{- 1}, n + S_k) \cdot \boldsymbol{y}=
  0$.
\end{proposition}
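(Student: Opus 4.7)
The plan is to exhibit a ring homomorphism between two copies of the relevant Weyl-like algebra: one acting on formal logarithmic sums via $z$ and $\theta$, the other acting on double sequences via $S_n^{-1}$ and $n+S_k$ (where $n$ denotes the diagonal operator $(y_{n,k}) \mapsto (n\,y_{n,k})$). The proposition then reduces to the statement that the tautological bijection $\Psi$ between formal logarithmic sums and their coefficient sequences intertwines these two actions.

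First I would check the identity on generators. A direct computation on the basis monomials $e_{n,k} = z^n \log^k z/k!$ gives
\[
  z \cdot e_{n,k} = e_{n+1,k}, \qquad
  \theta \cdot e_{n,k} = n\, e_{n,k} + e_{n,k-1}.
\]
Reading off the coefficient of $e_{n,k}$ in $z\cdot y$ and $\theta \cdot y$ then yields
$\Psi(z\cdot y) = S_n^{-1}\,\Psi(y)$ and $\Psi(\theta\cdot y) = (n+S_k)\,\Psi(y)$.

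Next I would verify that the commutation relations match, so that the assignment $z\mapsto S_n^{-1}$, $\theta\mapsto n+S_k$ extends to a homomorphism from the Weyl-like algebra $W = K\langle z,\theta\rangle/(\theta z - z\theta - z)$ to operators on coefficient sequences. On formal logarithmic sums one has $[\theta,z]=z$ by the Leibniz rule, hence $L(z,\theta)$ may be regarded as an element of $W$. On the other side, $S_k$ commutes with $S_n^{-1}$, while $(n\,S_n^{-1} - S_n^{-1}\,n)\cdot y_{n,k} = (n\,y_{n-1,k} - (n-1)y_{n-1,k}) = (S_n^{-1}y)_{n,k}$, so $[n+S_k,\,S_n^{-1}]=S_n^{-1}$. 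The two defining relations agree, so the substitution is well-defined on $W$.

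Finally I would promote the generator-level identities to arbitrary $L\in W$ by a straightforward induction on monomial degree: if $\Psi(Lu)$ and $\Psi(Mu)$ agree with the expected operators applied to $\Psi(u)$, then so do $\Psi((L+M)u)$ and $\Psi((LM)u) = \Psi(L(Mu))$. Applied to $u = y$, this yields $\Psi(L(z,\theta)\cdot y) = L(S_n^{-1},n+S_k)\cdot \Psi(y)$, and either side vanishes iff the other does. The only point that requires minor care is the formalism: one must ensure the action is well-defined, i.e.\ that for each $n$ only finitely many $y_{n,k}$ are nonzero, so that $\theta$ (which shifts $k$ downward) does not create divergent sums in $k$. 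This is preserved by both $z$ and $\theta$, so the setup is consistent and the bookkeeping above goes through.
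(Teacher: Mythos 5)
Your proof is correct and follows essentially the same route as the paper's: both compute the action of the generators $z$ and $\theta$ on coefficient sequences (yielding $S_n^{-1}$ and $n+S_k$) and then extend multiplicatively to all of $L$; your extra verification of the commutation relation $[\theta,z]=z$ versus $[n+S_k,S_n^{-1}]=S_n^{-1}$ and the closing convergence remark are harmless elaborations (each coefficient of $L\cdot y$ is in any case a finite combination of the $y_{n',k'}$, so no divergence can arise).
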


\begin{proof}
  The operators $z$ and $\theta$ act on logarithmic sums by $z \cdot y (z) =
  \sum_{n \in \lambda +\mathbbm{Z}} \sum_{k \geq 0} y_{n - 1, k} 
  \frac{\log^k z}{k!} z^n$ and  $\theta \cdot y (z) = \sum_{n \in \lambda
  +\mathbbm{Z}} \sum_{k \geq 0} (ny_{n, k} + y_{n, k + 1}) \frac{\log^k
  z}{k!} z^n$. Thus the coefficient sequence of $L (z, \theta) \cdot y$ is $L
  (S_n^{- 1}, n + S_k) \cdot \boldsymbol{y}$.
\end{proof}

Assume that $y (z)$ satisfies~(\ref{eq:deq}). Then $L (S_n^{- 1}, n + S_k)
\cdot \boldsymbol{y}= 0$; additionally, (\ref{eq:Fuchs}) implies that $y_{n, k}
= 0$ for $k \geq t$, which translates into $S_k^t \cdot \boldsymbol{y}= 0$.
Set $L (z, \theta) = Q_0(\theta) + R (z, \theta) z$, and fix $n \in \lambda
+\mathbbm{Z}$. If $Q_0 (n) \neq 0$, then $Q_0 (n + X)$ is invertible in $K
(\lambda) [[X]]$, and
\[ \boldsymbol{y} = - \bigl(Q_0 (n + S_k)^{- 1} \operatorname{mod}
S_k^t\bigr) R (S_n^{-
   1}, n + S_k) S_n^{-1} \cdot \boldsymbol{y}. \]
In general, let $\mu (n)$ be the multiplicity of $n$ as a root of~$Q_0$. Take
$T_n \in K (\lambda) [X]$ such that $T_n (X) \bigl(X^{- \mu (n)} Q_0 (n
+ X)\bigr) = 1 +
O (X^t)$ (explicitly, $T_n (S_k) = \sum_{v = 0}^{t - 1} \bigl(
\frac{\partial^v}{\partial X^v}  \frac{X^{\mu (n)}}{Q_0 (X)}\bigr)_{X = n} S_k^v$).
Then, it holds that
\begin{equation}\label{eq:regsingrec2}
  S_k^{\mu (n)} \cdot \boldsymbol{y} = - T_n (S_k) R (S_n^{- 1}, n +
  S_k) S_n^{-1}
  \cdot \boldsymbol{y}, 
\end{equation}
hence the $y_{n', k}$ with $n' < n$ determine $(y_{n, k})_{k \geq \mu
(n)}$, while the $y_{n, k}$, $k < \mu (n)$ remain free.

Coupled with the condition $y_{n, k} = 0$ for $n - \lambda < 0$ following from~(\ref{eq:Fuchs}), this implies that a
solution $y$ is characterized by $(y_{n, k})_{(n, k) \in E}$, where $E =\{(n,
k) \mathbin| k < \mu (n)\}$. As in \S\ref{sec:evaldiffeq}, this choice of initial
values induces that of canonical solutions (at $0$) $y [n, k]$ defined by
$y [n, k]_{n, k} = 1$ and $y [n, k]_{n', k'} = 0$ for  $(n', k') \in E
\setminus \{(n, k)\}$. The notion of transition matrix extends,
see~{\cite[\S4]{vdH2001}} for a detailed discussion.

Equation~(\ref{eq:regsingrec2}) is a ``recurrence relation'' that lends itself
to binary splitting. The main difference with the setting
of~\S\ref{sec:binsplit} is that the companion matrix of the ``recurrence
operator'' now contains truncated power series, {\emph{i.e.}}, $B \in K (n)
[[S_k]] / (S_k^t)$. The coefficients $y [u, v]_n = \sum_{k =
0}^{t - 1} y [u, v]_{n, k} \log^k z$ of canonical solutions may be computed
fast by forming the product ${B (n - 1) \cdots B (u)} \in K (\lambda) [[S_k]] /
(S_k^t)$ and applying it to the initial values $Y_u = (0, \ldots, 0, \log^v
z)^{\mathrm{T}}$. Compared to the algorithm of~{\cite{vdH2001}},
multiplications of power series truncated to the order $t$
replace multiplications of $t \times t$ submatrices, so that our method
is slightly faster. As in~§\ref{sec:evaldiffeq}, all entries of
$M_{z_0 \rightarrow z_1}$ may (and should) be computed at once.

\section{Error control}\label{sec:errorcontrol}

Performing numerical analytic continuation rigorously requires a number of
bounds that control the behaviour of the function under consideration. We now
describe how error control is done in NumGfun. Some of the ideas used in this section
appear scattered in~\cite{vdH1999}--\cite{vdH2007b}. NumGfun
relies almost exclusively on {\emph{a priori}} bounds;
see~{\cite[\S5.2]{MezzarobbaSalvy2009}} for pointers to alternative
approaches, and~{\cite{vdH2007b}} for further useful techniques, including how
to refine rough {\emph{a priori}} bounds into better {\emph{a posteriori}}
bounds.

We start by discussing the computation of {\emph{majorant series}} for the
canonical solutions of the differential equation. Then we describe how these
are used to determine, on the one hand, at which precision each transition
matrix should be computed for the final result to be correct, and on the other
hand, where to truncate the series expansions to
achieve this precision. Finally, we propose a way to limit the cost of
computing bounds in ``bit burst'' phases.

\begin{remark}
  In practice, numerical errors that happen while computing the error bounds themselves are not always controlled, due to limited support for interval arithmetic in Maple.
  However, we have taken some care to ensure that crucial steps rely on
  rigorous methods.
\end{remark}
\fixdefsectionspacing

\paragraph*{Majorant series}A formal power series $g \in \mathbbm{R}_+ [[z]]$
is a {\emph{majorant series}} of $y \in \mathbbm{C}[[z]]$, which we denote by
$y \trianglelefteq g$, if $\mathopen| y_n \mathclose| \leq g_n$ for all $n$.
If $y (z) \trianglelefteq g (z)$ and $\hat{y} (z) \trianglelefteq
\hat{g} (z)$, then
\begin{equation}\label{eq:majseriesprop} 
\begin{gathered}
  y (z) \leq g (\mathopen|z\mathclose|), \qquad
  \textstyle\frac{\mathrm d}{\mathrm d z}\,y(z) \trianglelefteq \textstyle \frac{\mathrm d}{\mathrm d z}\, g(z), \\
  y + \hat{y} \trianglelefteq g + \hat{g}, \qquad
  y \hat{y} \trianglelefteq g \hat{g}, \qquad
  y \circ \hat{y} \trianglelefteq g \circ \hat{g}.
\end{gathered}
\end{equation}
We shall allow majorant series to be formal logarithmic sums or matrices. The
relation $\trianglelefteq$ extends in a natural way: we write $\sum_{n, k} y_{n,
k} z^n_{} \log z^k \trianglelefteq \sum_{n, k} g_{n, k} z^n_{} \log z^k$
i{f}f $\mathopen| y_{n, k} \mathclose| \leq g_{n, k}$ for all $n$ and $k$, and  $Y
= (y_{i, j}) \trianglelefteq G = (g_{i, j})$ i{f}f $Y$ and $G$ have the
same format and $y_{i, j} \trianglelefteq g_{i, j}$ for all $i, j$. In
particular, for scalar matrices, $Y \trianglelefteq G$ if
$\mathopen| y_{i,
j} \mathclose| \leq g_{i, j}$ for all $i, j$. The
inequalities~(\ref{eq:majseriesprop}) still hold. 

\paragraph*{Bounds on canonical solutions}The core of the error control is a
function that computes $g (z)$ such that
\begin{equation}
  \label{eq:majcansol}
  \forall j, \hspace{1em} y[z_0,j](z_0+z) \trianglelefteq g (z)
\end{equation}
(in the notations of~\S\ref{sec:evaldiffeq}) given Equation~(\ref{eq:deq}) and
a point $z_0$. This function is called at each step of analytic continuation.

The algorithm is based on that of~{\cite{MezzarobbaSalvy2009}} (``SB'' in what
follows), designed to compute ``asymptotically tight'' {\emph{symbolic}}
bounds. Run over
$\mathbbm{Q}(i)$ instead of $\mathbbm{Q}$, SB returns (in the case of convergent series) a power series satisfying~(\ref{eq:majcansol}) of the form
$g (z) = \sum_{n = 0}^{\infty} n!^{- \tau} \alpha^n \phi (n) z^n$, where
$\tau \in \mathbbm{Q}_+$, $\alpha \in \bar{\mathbbm{Q}}$, and $\phi (n) =
e^{o (n)}$. The series~$g$ is specified by $\tau$, $\alpha$, and other
parameters of no interest here that define~$\phi$. The tightness property
means that $\tau$ and $\alpha$ are the best possible.

However, the setting of numerical evaluation differs from that of symbolic
bounds in several respects: (i)~the issue is no more to obtain human-readable
formulae, but bounds that are easy to evaluate; (ii)~bound computations should
be fast; (iii)~at regular singular points, the fundamental solutions are
logarithmic sums, not power series. For space reasons, we only sketch the differences between SB and the variant we use for error control.

First, we take advantage of~(i) to solve~(ii). The most important change is that the parameter~$\alpha$ is replaced by an approximation $\tilde\alpha\geq\alpha$. This avoids computations with algebraic numbers, the bottleneck of SB. Strictly speaking, it also means that we are giving up the tightness of the bound. However, in constrast with the ``plain'' majorant series
method~{\cite{vdH2001,vdH2003,vdH2007b}}, we are free to take $\tilde{\alpha}$
arbitrarily close to $\alpha$, since we do not use the ratio $(\alpha /
\tilde{\alpha})^n$ to mask subexponential factors.
The algorithms from~\cite{MezzarobbaSalvy2009} adapt without difficulty. Specifically, Algorithms~3 and 4 are modified to work with~$\tilde\alpha$ instead of~$\alpha$. Equality tests between ``dominant roots'' (Line~9 of Algorithm~3, Line~5 of Algorithm~4) can now be done numerically and are hence much less expensive. As a consequence, the parameter~$T$ from~\S3.3 is also replaced by an upper bound. This affects only the parameter $\phi(n)$ of the bound, which is already pessimistic.

The issue~(iii) can be dealt with too. One may compute a series $g$ such that $y
(z) \trianglelefteq g (z) \sum_{k = 0}^{t - 1} \frac{\log^k z}{k!}$ 
(with the notations of~\S\ref{sec:regsing})
by
modifying the choice of~$K$ in~{\cite[\S3.3]{MezzarobbaSalvy2009}} so
that
$\sum_{k = 0}^{t - 1} \bigl| \bigl( \frac{\partial^k}{\partial X^k}  \frac{X^{\mu
(n)}}{Q_0 (X)}\bigr)_{X = n} \bigr| \leq K / n^r$, and
replacing~{\cite[Eq.~(14)]{MezzarobbaSalvy2009}} by Eq.~(\ref{eq:regsingrec2})
above in the reasoning.

\paragraph*{Global error control}We now consider the issue of controlling how
the approximations at each step of analytic continuation accumulate. Starting
with a user-specified error bound $\epsilon$, we first set $\epsilon'$
so that an $\epsilon'$-approximation $\tilde{r} \in \mathbbm{Q}(i)$ of the
exact result $r$ rounds to $\diamond ( \tilde{r}) \in \bigcup_{n \in \mathbbm{N}} 10^{-n} \mathbbm{Z}[i]$ with $\mathopen| \diamond ( \tilde{r}) - r\mathclose| < \epsilon$. No other
{\emph{rounding}} error occur, since the rest of the computation is done on
objects with coefficients in $\mathbbm{Q}(i)$. However, we have to choose the
precision at which to compute each factor of the product $\Pi = \tilde{R} 
\tilde{M}_{m - 1} \cdots \tilde{M}_0 \tilde{I}$ (in \texttt{evaldiffeq}, and
of similar products for other analytic continuation functions) so that
$\mathopen|
\tilde{r} - r\mathclose| < \epsilon'$.

As is usual for this kind of applications, we use the Frobenius norm, defined
for any (not necessarily square) matrix by $\mathopen\| (a_{i, j})
\mathclose\|_{\mathrm{F}} = ( \sum_{i, j} \left| a_{i, j} \right|^2)^{1 / 2}$. The
Frobenius norm satisfies $\mathopen\| AB \mathclose\|_{\mathrm{F}} \leq \mathopen\| A
\mathclose\|_{\mathrm{F}}  \mathopen\| B \mathclose\|_{\mathrm{F}}$ for any matrices $A,
B$ of compatible dimensions. If $A$ is a square $r \times r$ matrix,
\begin{equation}\label{eq:propFrobenius}
  \mathopen\| A \mathclose\|_{\infty} \leq \mathopen\interleave A
  \mathclose\interleave_2 \leq
  \mathopen\| A \mathclose\|_{\mathrm{F}} \leq r \mathopen\| A \mathclose\|_{\infty}
\end{equation}
where $\mathopen\interleave \mathord\cdot \mathclose\interleave_2$ is the matrix norm induced by the
Euclidean norm while $\mathopen\| \mathord\cdot \mathclose\|_{\infty}$ denotes the
{\emph{entrywise}} uniform norm. Most importantly, the computation of $\mathopen\|
A \mathclose\|_{\mathrm{F}}$ is numerically stable, and if $A$ has entries in
$\mathbbm{Q}(i)$, it is easy to compute a good upper bound on $\mathopen\| A
\mathclose\|_{\mathrm{F}}$ in floating-point arithmetic.

We bound the total error $\epsilon_{\Pi}$ on $\Pi$ using repeatedly the
rule $\| \tilde{A}  \tilde{B} - AB\|_{\mathrm{F}} \leq \| \tilde{A}
\|_{\mathrm{F}} \| \tilde{B} - B\|_{\mathrm{F}} +\| \tilde{A} -
A\|_{\mathrm{F}} \|B\|_{\mathrm{F}}$. From there, we compute precisions
$\epsilon_A$ such that having $\| \tilde{A} - A\|_{\mathrm{F}} <
\epsilon_A$ for each factor $A$ of $\Pi$ ensures that $\epsilon_{\Pi} <
\epsilon'$. Upper bounds on the norms $\| \tilde{A} \|_{\mathrm{F}} $ and
$\|A\|_{\mathrm{F}}$ appearing in the error estimate are computed either from
an approximate value of $A$ (usually $\tilde{A}$ itself) if one is available
at the time, or from a matrix~$G$ such that $A \trianglelefteq G$ given
by~(\ref{eq:majseriesprop}, \ref{eq:majcansol}).

\paragraph*{Local error control}
Let us turn to the computation of individual
transition matrices. We restrict to the case of ordinary points. Given a precision $\epsilon$ determined by
the ``global error control'' step, our task is to choose~$N$ such that $\|
\tilde{M} - M_{z_0 \rightarrow z_1} \|_{\mathrm{F}} \leq \epsilon$ if
$\tilde{M}$ is computed by truncating the entries of~(\ref{eq:transition}) to
the order~$N$. If $g$ satisfies~(\ref{eq:majcansol}), it suffices that $g_{N
;}^{(i)} (|z_1 - z_0 |) \leq \frac{i!}{r} \epsilon$ for all~$i$, as
can be seen from~(\ref{eq:majseriesprop}, \ref{eq:propFrobenius}). We find a
suitable~$N$ by dichotomic search. For the family of majorant series~$g$ used in
NumGfun, the $g_{N ;}^{(i)} (x)$ are not always easy to evaluate, so we bound
them by expressions involving only elementary functions~{\cite[\S4.2]{MezzarobbaSalvy2009}}. The basic idea, related to the
saddle-point method from asymptotic analysis, is that if $x \geq 0$,
inequalities like $g_{n ;} (x) \leq (x / t_n)^n g (t_n) (1 - x / t_n)^{-
1}$  are asymptotically tight for suitably chosen $t_n$.

\paragraph*{Points of large bit-size}Computing the
majorants~(\ref{eq:majcansol}) is expensive when the point $z_0$ has large
height. This can be fixed by working with an approximate value $c$ of $z_0$ to
obtain a bound valid in a neighborhood of $c$ that contains~$z_0$. This
technique is useful mainly in ``bit burst'' phases (where, additionally, we
can reuse the same $c$ at each step).

Assume that $\boldsymbol{Y}[c] (c + z) \trianglelefteq G (z)$ for some $c
\approx z_0$. Since $\boldsymbol{Y}[z_0] (z_0 + z) =\boldsymbol{Y}[c] (c + (z_0 -
c) + z) M_{c \rightarrow z_0}^{- 1}$ by~(\ref{eq:ancont}), it follows that
$\boldsymbol{Y}[z_0] (z_0 + z) \trianglelefteq G (|z_0 - c| +
z)\,\mathopen\|M_{c
\rightarrow z_0}^{- 1} \mathclose{\|} \boldsymbol{1}$ where~$\boldsymbol{1}$ is a square matrix
filled with ones. Now $M_{c \rightarrow c + z} = \operatorname{Id} + O
(z)$, whence $M_{c \rightarrow c + z} - \operatorname{Id} \trianglelefteq G (z) -
G (0)$. For $|z_0 - c| < \eta$, this implies $\|M_{c \rightarrow z_0} -
\operatorname{Id} \|_{\mathrm{F}} \leq \delta := \|G (\eta) - G
(0)\|_{\mathrm{F}}$. Choosing $\eta$ small enough that $\delta < 1$,
we get
$\|M_{c \rightarrow z_0}^{- 1} \|_{\mathrm{F}} \leq (1 - \delta)^{- 1}$.
If $G$ was computed from~(\ref{eq:majcansol}), {\emph{i.e.}},
for $G (z) = \bigl( \frac{1}{j!} g^{(j)} (z)\bigr)_{i, j}$,
this finally gives the bound 
\[ y [z_0, j]
(z_0+z) \trianglelefteq \frac{r}{j! \, (1 - \delta)} g^{(j)} (\eta + z), \] valid
for all $z_0$ in the disk $|z_0 - c| < \eta$.

Note however that simple differential equations have solutions like $\exp (K
/ (1 - z)$) with large $K$. The condition $\delta < 1$ then forces us to take
$c$ of size $\Omega (K)$. Our strategy to deal with this issue in NumGfun is
to estimate $K$ using a point~$c$ of size $O (1)$ and then to choose a more
precise $c$ (as a last resort, $c = z_0$) based on this value if necessary.

\begin{remark}
  If the evaluation point $z$ is given as a program, a similar
  reasoning allows to choose automatically an approximation
  $\tilde{z} \in \mathbbm{Q}(i)$ of $z$ such that $|y ( \tilde{z}) - y (z) |$
  is below a given error bound~{\cite[\S4.3]{vdH1999}}. In other applications,
  it is useful to have bounds on transition matrices that hold uniformly for
  {\emph{all}} small enough steps in a given domain. Such bounds may be
  computed from a majorant differential equation with constant
  coefficients~{\cite[\S5]{vdH2007}}.
\end{remark}
\fixdefsectionspacing

\section{Repeated evaluations}\label{sec:diffeqtoproc}

Drawing plots or computing integrals numerically requires to evaluate the same
function at many points, often to a few digits of precision only. NumGfun
provides limited support for this through the function
\texttt{diffeqtoproc}, which takes as input a \mbox{D-finite} function $y$, a
target precision $\epsilon$, and a list of disks, and returns a Maple
procedure that performs the numerical evaluation of $y$. For each disk $D
=\{z, \mathopen|z - c \mathclose| < \rho\}$, \texttt{diffeqtoproc} computes a polynomial
$p \in \mathbbm{Q}(i) [z]$ such that $\mathopen| \diamond (p (z)) - y
(z) \mathclose|
< \epsilon$ for $z \in D \cap \mathbbm{Q}(i)$, where $\diamond$ again denotes rounding to complex decimal. The procedure returned by \texttt{diffeqtoproc} uses the precomputed $p$ when possible, and
falls back on calling \texttt{evaldiffeq} otherwise.

The approximation polynomial $p$ is constructed as a linear combination of truncated Taylor series of fundamental solutions $y[c, j]$, with coefficients obtained by
numerical analytic continuation from $0$ to $c$.
The way we choose expansion
orders is
similar to the error control techniques of~§\ref{sec:errorcontrol}: we first compute a bound $B_{\operatorname{can}}$ on the
fundamental solutions and their first derivatives on the disk~$D$. The vector $Y (c)$ of ``initial values'' at
$c$ is computed to the precision $\epsilon' / B_{\operatorname{can}}$ where
$\epsilon' \leq \epsilon / (2 r)$. We also compute
$B_{\operatorname{ini}} \geq \|Y (c)\|_{\mathrm{F}}$. Each $y [c, j]$ is expanded
to an order $N$ such that $\mathopen\| y [c, j]_{N ;} \mathclose\|_{\infty, D}
\leq \epsilon' / B_{\operatorname{ini}}$, so that finally $|p (z) - y (z) |
\leq \epsilon$ for $z \in D$.

The most important feature of \texttt{diffeqtoproc} is that it produces certified results. At low precisions and in the absence of singularities, we expect that interval-based numerical solvers will perform better while still providing (\emph{a posteriori}) guarantees. Also note that our simple choice of~$p$ is far from optimal. If approximations of smaller degree or height are required, a natural approach is to aim for a slightly smaller error $\mathopen{\|}y - p\mathclose{\|}_{\infty, D}$ above, and then replace~$p$ by a polynomial~$\tilde{p}$ for which we
can bound $\mathopen{\|}p - \tilde{p} \mathclose{\|}_{\infty}$~{\cite[\S6.2]{vdH2007b}}.

\begin{example}
  The following plot of the function~$y$ defined by  
  $(z-1)\,y''' - z(2z-5)\,y'' - (4z-6)\,y' + z^2(z-1)\,y$ with the initial values
  $y(0)=2$, $y'(0)=1$, $y''(0)=0$ was obtained using polynomial approximations on several disks that cover the domain of the plot while avoiding the pole $z=1$.   The whole computation takes about 9~s.
  Simple numerical integrators typically fail to evaluate~$y$ beyond~$z=1$.
  \begin{center}
  \includegraphics{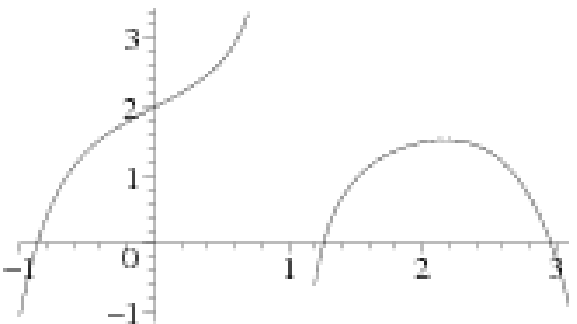}
  \end{center}
\end{example}
\fixdefsectionspacing

\section{Final remarks}

Not all of NumGfun was described in this article. The symbolic bounds
mentioned in~\S\ref{sec:errorcontrol} are also implemented, with functions
that compute majorant series or other kinds of bounds on rational functions
(\texttt{bound\_ratpoly}), \mbox{D-finite} functions (\texttt{bound\_diffeq} and
\texttt{bound\_diffeq\_tail}) and P-recursive sequences
(\texttt{bound\_rec} and \texttt{bound\_rec\_tail}). This implementation
was already presented in~{\cite{MezzarobbaSalvy2009}}.

Current work focuses on adding support for evaluation ``at regular singular points'' (as outlined in \S\ref{sec:regsing}), and improving performance. The development version of NumGfun already contains a second implementation of binary splitting, written in~C and called from the Maple code. In the longer term, I~plan to rewrite other parts of the package ``from the bottom up'', both for efficiency reasons and to make useful subroutines independant of Maple.

\paragraph*{Acknowledgements}I am grateful to my advisor, B.~Salvy, for encouraging me to conduct this work and offering many useful comments. Thanks also to A.~Benoit, F.~Chyzak, P.~Giorgi, J.~van der Hoeven and A.~Vaugon for stimulating conversations, bug reports, and/or comments on drafts of this article, and to the anonymous referees for helping make this paper more readable.

This research was supported in part by the MSR-INRIA joint research center.

\end{document}